\begin{document}
\renewcommand{\lstlistingname}{Algorithm}
\title{Improved Combinatorial Approximation Algorithms for MAX CUT in Sparse Graphs}
%
%
\author{Eiichiro Sato}
\authorrunning{E. Sato}
%
\institute{Department of Computer Science, The University of Tokyo, Japan
\email{gasin@is.s.u-tokyo.ac.jp}}
\maketitle              
\begin{abstract}
The Max-Cut problem is a fundamental NP-hard problem, which is attracting attention in the field of quantum computation these days.
Regarding the approximation algorithm of the Max-Cut problem, algorithms based on semidefinite programming have achieved much better approximation ratios than combinatorial algorithms. Therefore, filling the gap is an interesting topic as combinatorial algorithms also have some merits. 
In sparse graphs, there is a linear-time combinatorial algorithm with the approximation ratio $\frac{1}{2}+\frac{n-1}{4m}$ [Ngoc and Tuza, Comb. Probab. Comput. 1993], which is known as the Edwards-Erd\H{o}s bound. In subcubic graphs, the combinatorial algorithm by Bazgan and Tuza [Discrete Math. 2008] has the best approximation ratio $\frac{5}{6}$ that runs in $O(n^2)$ time.
Based on the approach by Bazgan and Tuza, we introduce a new vertex decomposition of graphs, which we call tree-bipartite decomposition. With the decomposition, we present a linear-time $(\frac{1}{2}+\frac{n-1}{2m})$-approximation algorithm for the Max-Cut problem. As a derivative, we also present a linear-time $\frac{5}{6}$-approximation algorithm in subcubic graphs, which solves an open problem in their paper.
\keywords{MAX CUT  \and approximation algorithm \and combinatorial algorithm \and tree-bipartite decomposition.}
\end{abstract}
\section{Introduction}
The Max-Cut problem, the problem to find the maximum bipartite subgraph of a given graph, is a classical NP-hard problem.
The Max-Cut problem can be defined in weighted graphs, but only unweighted graphs are treated in this paper, which is also called the simple Max-Cut problem. Only connected graphs are considered as an input because extension for disconnected graphs is trivial.

Besides its theoretical interest, some practical applications are known for the Max-Cut problem such as circuit design and statistical physics~\cite{Application}.
Approximation algorithms are a common approach to solve NP-hard problems. Although there are known limits on the approximation ratio for the Max-Cut problem~\cite{General Inapprox,UGC Inapprox,3-regular Inapprox}, various approximation algorithms have been developed for many classes of the problem. 
Recently, in the field of Quantum Approximation Optimization Algorithms (QAOA), approximation algorithms for bounded degree graphs have attracted much attention~\cite{QAOA,QAOA-nature}. Along with this, the estimation of the size of the maximum cut and approximation algorithms in the classical computation are also attracting attention. 
For example, the comparison of results from classical computation with those from QAOA in regular graphs with high girths is well summarized in \cite{Quantum&Classical} or in \cite{Basso}.


In this paper, we deal with combinatorial algorithms for the Max-Cut problem. A combinatorial algorithm here means an algorithm that does not include numerical operations such as matrix operations and eigenvalue calculations as subroutines. The advantages of combinatorial algorithms are that they do not require consideration of numerical errors, they are fast when implemented, and they are useful for understanding the structure of the problem~\cite{comb-general}. However, the approximation ratios by the combinatorial algorithms are considerably worse than those based on semidefinite programming (SDP). For general graphs, Kale and Seshadohri~\cite{comb-general} have developed a combinatorial algorithm whose approximation ratio is better than 0.5, but the approximation ratio is worse than $0.87856$-approximation algorithm based on SDP by Goemans and Williamson~\cite{SDP-general}. For subcubic graphs, graphs with the maximum degree three, the SDP-based algorithm achieves a $0.9326$-approximation ratio~\cite{cubic}, while the best approximation ratio by a combinatorial algorithm is $\frac{5}{6}=0.8333$~\cite{BZ}. Since the algorithm by Goemans and Williamson, which is implementable in nearly linear time~\cite{linear-SDP,Spectral}, achieves the optimal approximation ratio under Unique Game Conjecture~\cite{UGC Inapprox}, it is unlikely to exceed the approximation ratios of SDP-based algorithms by combinatorial algorithms, but it is interesting to fill the gaps as the combinatorial algorithms have their own advantages.


The difficulty of combinatorial approximation algorithms is that they need to find a large size cut and at the same time evaluate the upper bound of the maximum cut. Therefore, most of the previous results by combinatorial algorithms are just to find a large size cut. For general graphs, Ngoc and Tuza~\cite{linear-approx} show a cut with the size $\frac{m}{2}+\frac{n-1}{4m}$ can be constructed in linear time, which is known as the Edwards-Erd\H{o}s bound~\cite{Edwards,Erdos}, and if the maximum degree of vertices is $\Delta$, Hofmeister and Lefmann~\cite{Bounded Comb} show a cut with the size $\frac{m}{2}+\frac{m}{2\Delta}$ can be constructed in linear time. For triangle-free subcubic graphs, Bondy and Locke~\cite{Comb subcubic-triangle-free} show that a cut of the size $\frac{4m}{5}$ can be constructed in $O(n^2)$ time, and Halperin, et al.~\cite{cubic} extend the result by showing that the triangle-free constraint can be removed while preserving the $\frac{4}{5}$-approximation ratio, which utilizes the fact that the maximum cut can not include all edges of a triangle. In subcubic graphs, Bazgan and Tuza~\cite{BZ} construct a $\frac{5}{6}$-approximation algorithm which runs in $O(n^2)$ time by decomposing a subcubic graph into vertex-disjoint unicyclic graphs, graphs with exactly one cycle, and a tree under some conditions. In the analysis of the $\frac{5}{6}$-approximation ratio, the number of edge-disjoint odd cycles are used to bound the size of the maximum cut.

The algorithm by Bazgan and Tuza~\cite{BZ} is to find the maximum cut in each unicyclic graph and include more than half of the edges between unicyclic graphs. In subcubic graphs, if two edges are included in a cut for each vertex and the remaining edges are included in a cut with the probability $\frac{1}{2}$, the approximation ratio $\frac{5}{6}$ can be achieved. As the average degree of a unicyclic graph is two, it seems to achieve the goal. But, if a unicyclic graph contains an odd cycle, the approximation ratio is not simply achieved since the size of the maximum cut is less than the number of edges of the unicyclic graph. Bazgan and Tuza solve this problem by using the fact that the presence of odd cycles suppresses the upper bound on the maximum cut, and that the subcubic property can be used to increase the number of edges between unicyclic graphs which are included in a cut. However, constraints are imposed on the decomposition to achieve the latter goal, which causes the construction of the decomposition to take $O(n^2)$ time. In addition, in unicyclic decomposition, the last component can be a tree, which worsens the approximation ratio. To solve this problem, Bazgan and Tuza use an $O(n^2)$-time recursive algorithm that contains a large constant in the time complexity, which may cause a problem in practice.

Our algorithms are based on the algorithm by Bazgan and Tuza, but we use a new vertex decomposition of a graph instead of unicyclic decomposition. In our decomposition, a tree inducing an odd cycle, which is defined in Section3, is used instead of a unicyclic graph with an odd cycle, and we can achieve the target approximation ratio by simply suppressing the upper bound on the maximum cut. Since our decomposition does not require subcubic property, it is applicable to general graphs and can be constructed in linear time. Although the problem that the last component of the decomposition can be a tree still remains, we solve it by considering it as a special case of the edge bipartization problem~\cite{bipartization}. For general graphs, such an edge bipartization problem takes $O(nm)$ time to solve, but for sufficiently sparse graphs, the problem can be solved in linear time.

\subsection{Our Contribution}
We developed the following combinatorial algorithms for the Max-Cut problem in general graphs, which works well on sparse graphs.
\begin{itemize}
\item{}$O(m)$-time $(\frac{1}{2}+\frac{n-1}{2m})$-approximation algorithm
\item{}$O(nm)$-time $(\frac{1}{2}+\frac{n}{2m})$-approximation algorithm
\end{itemize}
Though the purpose of the previous researches~\cite{Edwards,Bounded Comb} is to bound the size of the maximum cut, as far as we know, there exist no combinatorial algorithms which achieve a better approximation ratio than those. From the perspective of the approximation ratio, the Edwards-Erd\H{o}s bound~\cite{Edwards} means $(\frac{1}{2}+\frac{n-1}{4m})$-approximation, and the bound by Hofmeister and Lefmann~\cite{Bounded Comb} means $(\frac{1}{2}+\frac{1}{2\Delta})$-approximation, which is not better than $(\frac{1}{2}+\frac{n}{4m})$-approximation since $\Delta \geq \frac{2m}{n}$. Therefore, our $(\frac{1}{2}+\frac{n-1}{2m})$-approximation algorithm achieves the best approximation ratio among them without worsening the time complexity.

We also developed a combinatorial algorithm for the Max-Cut problem in graphs with $\frac{m}{n}\leq 2$, which are very sparse.
\begin{itemize}
\item{} $O(n)$-time $(\frac{1}{2}+\frac{n}{2m})$-approximation algorithm
\end{itemize}
As subcubic graphs are graphs where $\frac{m}{n} \leq \frac{3}{2}$, this algorithm runs in $O(n)$ time with the approximation ratio $\frac{5}{6}$ in subcubic graphs, which solves the open problem posed in~\cite{BZ}.




\section{Notations}
Let $G(V,E)$ be a connected graph where $V$ is a set of vertices and $E$ is a set of edges. Define $V(G) := V, \ E(G) := E$ and define $G[V]$ as an induced subgraph in $G$ by $V$. Let $H_1,H_2$ be subgraphs of $G$. Define
$$E_G(H_1,H_2) := \{\{u,v\} \in E(G) \ | \ u\in V(H_1) \land v\in V(H_2)\}$$
and for subsets $V_1,V_2\subset V$, define
$$E_G(V_1,V_2) := \{\{u,v\} \in E(G) \ | \ u\in V_1 \land v\in V_2\}.$$
Let $H_1,\dots,H_t$ be a vertex-disjoint decomposition of $G$.
Then, define 
$$G[H_{\geq i}] := G[V(H_i)\cup V(H_{i+1})\cup \dots \cup V(H_t)]$$ and  $$G[H_{>i}] := G[V(H_{i+1})\cup \dots \cup V(H_t)].$$
Let $A,B$ be a partition of $V$. Define $C(A,B)$ as a cut of $G$, and the size of a cut $C(A,B)$ is defined as $|E_G(A,B)|$. 
Define $mc(G)$ as the size of the maximum cut of $G$.

\section{Tree-Bipartite Decomposition}
Tree-bipartite decomposition is a vertex-disjoint decomposition of a graph, which is inspired by unicyclic decomposition~\cite{BZ}. All of our algorithms are based on this decomposition.
The components of unicyclic decomposition are unicyclic graphs and a tree. In tree-bipartite decomposition, trees inducing an odd cycle (IOC trees) and cycle bipartite graphs (CB graphs) are used instead of unicyclic graphs.

\begin{figure}
\includegraphics[width=\textwidth]{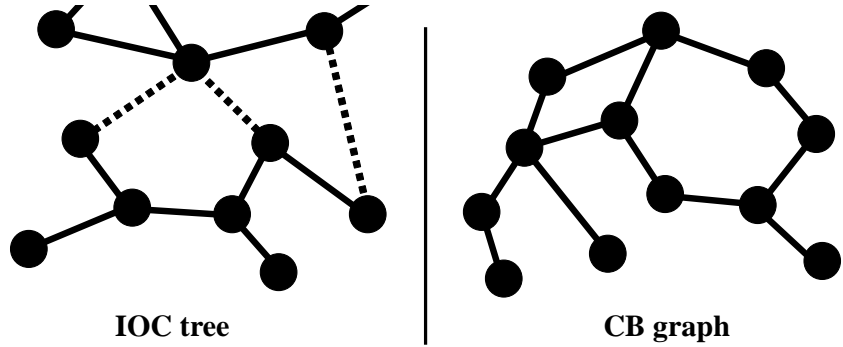}
\caption{Components of tree-bipartite decomposition} \label{fig1}
\end{figure}

A tree inducing an odd cycle can be seen as an extension of a unicyclic graph with an odd cycle. The good property of the tree is that the size of the maximum cut is equal to the number of edges while the tree detects an edge-disjoint odd cycle.
\begin{definition}
For some natural number $t$, let $H_1,\dots,H_t$ be a vertex-disjoint decomposition of a graph $G$. $H_i \ (1\leq i < t)$ is called a tree inducing an odd cycle if $H_i$ is a tree and there exists $r\in V(G[H_{>i}])$ such that $G[V(H_i)\cup\{r\}]$ contains an odd cycle (we call such a vertex a root of $H_i$).
We abbreviate a tree inducing an odd cycle with an IOC tree.
\end{definition}

Then, define a cyclic bipartite graph, which can be seen as an extension of a unicyclic graph with an even cycle.

\begin{definition}
A cyclic bipartite graph is a connected bipartite graph with at least one cycle. We abbreviate a cyclic bipartite graph with a CB graph. 
\end{definition}

Then, define the tree-bipartite decomposition. Note that the last component can be a tree. The other components are an IOC tree or a CB graph.

\begin{definition}
Given a graph $G$, let $t$ be a natural number and $H_1,\dots,H_t$ be a vertex-disjoint decomposition of $G$. $H_1,\dots,H_t$ is called a tree-bipartite decomposition if the following conditions are satisfied.\\
- for any $1\leq i < t$, $H_i$ is an IOC tree or a CB graph \\
- $H_t$ is a CB graph or a tree \\
- for any $1\leq i < t$, there exists $j > i$ such that $|E_G(H_i,H_j)| \geq 1$
\end{definition}

\begin{figure}
\includegraphics[width=\textwidth]{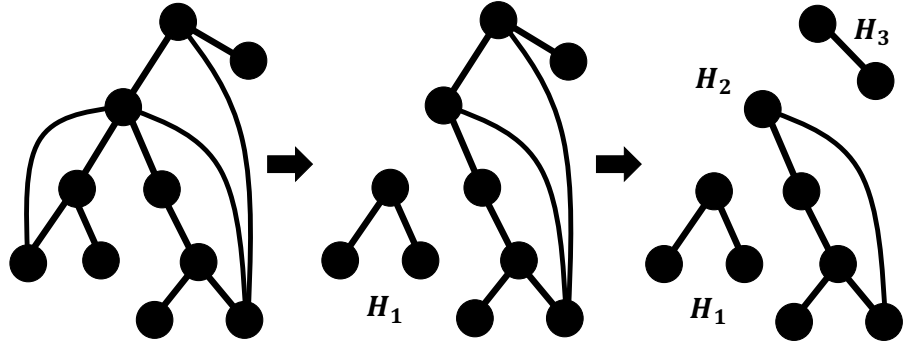}
\caption{Example of tree-bipartite decomposition. $H_1$ is an IOC tree, $H_2$ is a CB graph, and $H_3$ is a tree.} \label{fig3}
\end{figure}

Compared to the unicyclic decomposition~\cite{BZ}, the third condition is relaxed. Thereby, the construction of a tree-bipartite decomposition is simply implemented in linear time.

\begin{lemma}\label{lem1}
A tree-bipartite decomposition of a given graph $G(V,E)$ can be constructed in linear time.
\end{lemma}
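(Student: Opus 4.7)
The plan is algorithmic: construct the decomposition in a single pass using BFS (or DFS) from an arbitrary vertex. First I would run BFS from some $s\in V$, producing a spanning tree $T$ with level labels $\ell(\cdot)$, and during this BFS test bipartiteness by checking whether every non-tree edge spans two adjacent levels. If $G$ is bipartite, then $G$ itself is either a tree (when $G=T$) or a connected bipartite graph containing a cycle, i.e., a CB graph; in either case setting $t=1$ and $H_1=G$ satisfies the definition, and everything so far is linear time.

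If $G$ is non-bipartite, I would iteratively peel off IOC trees. The key step exploits a same-level non-tree edge $(u,v)$ with $\ell(u)=\ell(v)$: together with the tree paths $u\to w$ and $v\to w$ to their LCA $w$ in $T$, it forms an odd cycle $C$. Taking $V(H_i)=V(C)\setminus\{w\}$, provided no chord of $C$ avoids $w$, yields an induced path in $G$ whose endpoints are neighbors of $w$ at odd tree-distance, so $w$ (placed in a later component) serves as a root and $G[V(H_i)\cup\{w\}]$ contains the odd cycle $C$, meeting the IOC-tree definition. Should a forbidden chord exist, I would pivot to the strictly smaller odd cycle it produces and retry; after finitely many such pivots a chord-free configuration is reached. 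The residual graph $G\setminus V(H_i)$ is then handled recursively, and by the connectedness of $G$ each extracted component has at least one edge to the residual, so the third condition is easily enforced by placing a reachable residual component right after $H_i$; eventually the residual becomes bipartite and its connected components form the trailing CB graphs, culminating in an $H_t$ that may be a tree.

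To get linear total time, I would fold the iterative extraction into a single DFS. Processing DFS-tree vertices in post-order, I would maintain for each subtree the shallowest same-parity back-edge endpoint among its descendants, analogous to Tarjan's low-link value. When the subtree rooted at the current vertex first contains a back edge producing an odd cycle entirely inside that subtree, I would immediately extract the corresponding IOC tree and mark its vertices as processed, so that each vertex and incident edge is visited only $O(1)$ times amortized.

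The main obstacle I expect is the induced-subgraph condition for IOC trees: after removing the apex $w$, the remaining vertices must genuinely induce a tree, which can fail when $G$ contains a chord of $C$ disjoint from $w$. Pivoting along such a chord always produces a smaller odd cycle, but the amortized accounting has to ensure these pivots cost $O(m+n)$ in total; the cleanest way is to always work with the innermost odd cycle encountered during the post-order DFS, since a chord strictly inside it would have been exposed as an even deeper odd-cycle candidate earlier in the traversal, contradicting innermostness.
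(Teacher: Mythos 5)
Your high-level instinct (DFS, bottom-up detection of odd cycles, peeling off pieces that induce an odd cycle through a single vertex kept for later) matches the paper, but your choice of extraction unit breaks the construction. You extract $V(C)\setminus\{w\}$, a path carved out of the middle of the graph, and this does not control the residual: if part of $G$ hangs off a vertex $v$ of that path, removing the path orphans it as a separate acyclic bipartite component. Such a component is not a CB graph (it has no cycle), cannot be made an IOC tree (its only attachment is to the already-extracted path, and in any case adjoining $v$ back creates no cycle at all), and the definition allows only the single last component $H_t$ to be a plain tree. Your closing claim that ``the residual becomes bipartite and its connected components form the trailing CB graphs'' is exactly where this fails. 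A second gap is the chord handling: the pivoting argument is asserted to terminate in amortized linear time and the ``innermost odd cycle'' heuristic is asserted to preclude chords, but neither is proved, and the whole difficulty is an artifact of extracting only the cycle rather than something larger.

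The paper avoids both problems by always peeling entire DFS subtrees. It processes vertices $r$ in descending preorder; for each child subtree $S_i$ of $r$, if $G[V(S_i)\cup\{r\}]$ contains an odd cycle, the \emph{whole} subtree $S_i$ is extracted as an IOC tree with root $r$ --- it is guaranteed to induce a tree because every cycle lying strictly below $r$ was already removed at an earlier step of the sweep --- and if the subtree remaining at $r$ still contains a cycle, that cycle must be even, so it is extracted as a CB graph. Since only whole subtrees are ever removed, the residual is always a single connected subtree of the DFS tree, at most one plain tree survives to the end, and each extracted piece keeps an edge to its DFS parent, which lies in a later component, giving the third condition; the linear time bound is then immediate. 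If you rework your extraction step to remove $S_i$ rather than the bare odd cycle, both the orphaned-tree problem and the chord problem disappear.
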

\begin{proof}
Let $H = \emptyset$.
Construct a depth-first search tree $T$ from an arbitrary vertex. 
Then, iterate the vertices in $V$ in descending order of the preorder.
Let $r$ be the vertex that is being looked at. Let $S_1, \dots, S_d$ be the subtrees of the children of $r$ in $T$. For $1\leq i \leq d$, if $G[V(S_i)\cup\{r\}]$ contains an odd cycle, append $S_i$ at the end of $H\dots (i)$ and update $T \leftarrow T[V(T)\setminus V(S_i)]$. Let $S$ be the subtree of $r$ in $T$ after the operation against each subtree. If $G[V(S)]$ contains a cycle, append $G[V(S)]$ at the end of $H\dots (ii)$ and update $T \leftarrow T[V(T)\setminus V(S)]$. After the iteration of the vertices, if $V(T) \neq \emptyset$, append $T$ at the end of $H\dots (iii)$.

Both construction of a depth-first search tree and detection of an odd cycle can be done in $O(m)$ time. Therefore, the algorithm runs in $O(m)$ time as a whole.

Then, let us check the correctness.
The important property is that the vertices are seen in descending order of the preorder and that all cycles which appeared are removed from $T$. Therefore, the graph $G[V(S)\setminus \{r\}]$ does not contain cycles where $r$ is the vertex that is being looked at and $S$ is the subtree of $r$.
Then, graphs appended at $(i)$ are IOC trees, and a graph appended at $(iii)$ is a tree, and as all odd cycles are removed at $(i)$, graphs appended at $(ii)$ are CB graphs.
For the third condition, if $H_i$ is an IOC tree, it is obvious from the definition. If $H_i$ is a CB graph and it does not contain the root of the depth-first search tree, it cuts the edge in the tree. If $H_i$ is a CB graph and it contains the root of the tree, then $i=t$, which contradicts the condition $1\leq i < t$.
\qed
\end{proof}

\section{Approximation Algorithms}
The construction of the cut and its analysis are very similar to~\cite{BZ}.
The algorithm in~\cite{BZ} does additional operation when dealing with unicyclic graphs with an odd cycle, since it is required to increase the number of edges between unicyclic graphs which are included in a cut.
Since we do not need to do such operations when merging IOC trees or CB graphs, our construction is simpler.

\begin{theorem}\label{thm1}
There is an $O(m)$-time combinatorial $(\frac{1}{2}+\frac{n-1}{2m})$-approximation algorithm for the Max-Cut problem.
\end{theorem}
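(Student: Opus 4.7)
The plan is to run a greedy 2-coloring on a tree-bipartite decomposition and bound both the constructed cut (from below) and $mc(G)$ (from above) using the decomposition's structural properties. The algorithm first applies Lemma~\ref{lem1} to build a tree-bipartite decomposition $H_1,\dots,H_t$ of $G$ in $O(m)$ time. Every $H_i$ is bipartite (as an IOC tree, a last tree, or a CB graph), so it has a unique 2-coloring up to swapping the two sides. The algorithm then colors the components in reverse order $H_t, H_{t-1}, \dots, H_1$: when handling $H_i$, it picks the one of the two orientations of $H_i$'s bipartition that puts more edges of $E_G(H_i, H_{>i})$ across the cut. Each inter-component edge is examined only when its lower-indexed endpoint's component is processed, so this stage, and hence the whole algorithm, runs in $O(m)$.

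\textbf{Lower bound on the constructed cut.} Since each $H_i$ is bipartite, all of its internal edges end up in the cut; let $I := \sum_i |E(H_i)|$. The two orientations of $H_i$ split $E_G(H_i, H_{>i})$ into complementary (cut, non-cut) subsets, so the chosen orientation contributes at least $|E_G(H_i, H_{>i})|/2$ to the external cut. Summing gives a cut of size at least $I + (m - I)/2 = (m + I)/2$. Let $p$ be the number of IOC trees and $\ell \in \{0,1\}$ the indicator that $H_t$ is a tree. An IOC tree and the possible last tree each contribute $|V(H_i)| - 1$ internal edges, while a CB graph contributes at least $|V(H_i)|$ (it is connected and contains a cycle); combined with $\sum_i |V(H_i)| = n$ this yields $I \geq n - p - \ell$.

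\textbf{Upper bound on $mc(G)$.} Each IOC tree $H_i$ provides an odd cycle $C_i \subseteq G[V(H_i) \cup \{r_i\}]$ with root $r_i \in V(H_{>i})$. For $i \neq j$ we have $V(H_i) \cap V(H_j) = \emptyset$, so any edge lying in both $G[V(H_i) \cup \{r_i\}]$ and $G[V(H_j) \cup \{r_j\}]$ would have to connect $r_j \in V(H_i)$ to $r_i \in V(H_j)$ simultaneously, contradicting the ordering condition $r_k \in V(H_{>k})$. Hence $C_1, \dots, C_p$ are pairwise edge-disjoint odd cycles in $G$, so any cut of $G$ must omit at least one edge per cycle, giving $mc(G) \leq m - p$.

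\textbf{Combining the bounds.} With $\mathrm{cut} \geq (m + n - p - \ell)/2$ and $mc(G) \leq m - p$, routine cross-multiplication yields
$$\frac{\mathrm{cut}}{mc(G)} \;\geq\; \frac{m + n - p - \ell}{2(m - p)} \;\geq\; \frac{m + n - 1}{2m} \;=\; \frac{1}{2} + \frac{n - 1}{2m},$$
since $m(m + n - p - \ell) - (m - p)(m + n - 1) = m(1 - \ell) + p(n - 1) \geq 0$. The main obstacle is the edge-disjointness claim in the upper-bound step, which is precisely what lets the savings on $mc(G)$ absorb the $p + \ell$ deficit in the internal-edge count of the lower-bound step.
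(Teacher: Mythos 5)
Your proposal is correct and follows essentially the same route as the paper's proof: build the tree-bipartite decomposition via Lemma~\ref{lem1}, greedily orient each bipartite component in reverse order to capture all internal edges plus at least half of the edges to already-processed components, lower-bound the internal edge count by $n-p-\ell$, and upper-bound $mc(G)$ by $m-p$ using one edge-disjoint odd cycle per IOC tree. Your treatment is marginally more explicit about the pairwise edge-disjointness of the odd cycles and unifies the two cases for $H_t$ via the indicator $\ell$, but the algorithm and analysis are the same.
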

\begin{proof}
Given a graph $G(V,E)$, let $H_1, \dots, H_t$ be a tree-bipartite decomposition of $G$ constructed by Lemma~\ref{lem1}. Iterate $H_1,\dots,H_t$ in reverse order and construct a cut $C(A,B)$ which is initialized as $C(\emptyset,\emptyset)$. Assuming that $C$ is constructed for the graph $G[H_{>i}]$, construct $C$ for the graph $G[H_{\geq i}]$. Let $C_i(A_i,B_i)$ be one of the maximum cuts of $H_i$. If $E_G(A_i,A)+E_G(B_i,B) \geq E_G(A_i,B)+E_G(B_i,A)$, update $C(A,B)$ to $C(A\cup B_i, B\cup A_i)$. Otherwise, update $C(A,B)$ to $C(A\cup A_i, B\cup B_i)$.

Computing the maximum cut of an IOC tree, a CB graph, or a tree can be done in linear time, and the tree-bipartite decomposition is constructed in linear time by Lemma~\ref{lem1}. Therefore, the time complexity of this algorithm is $O(m)$.

Then, let us analyze the approximation ratio. Suppose $H_t$ is a tree. For $1\leq i \leq t$, if $H_i$ is a tree or an IOC tree, $|E(H_i)| = |V(H_i)|-1$ and $mc(H_i) = |V(H_i)|-1$.
If $H_i$ is an CB graph, $|E(H_i)| \geq |V(H_i)|$ and $mc(H_i) = |E(H_i)|$.
Let $$c = \sum_{H_i \text{ is an CB graph}}|E(H_i)|-|V(H_i)|$$ and let $E_{out}$ be a set of edges which are not included in $E(H_j)$ for any $1\leq j \leq t$.
By its construction, $C$ contains at least a half of edges in $E_{out}$.
Therefore, if $x$ represents the number of IOC trees in $H$, the cut size of $C$ is at least
\begin{align*}
n+c-x-1 + \frac{1}{2}|E_{out}|&=n+c-x-1+\frac{1}{2}(m-n-c+x+1) \\
&= \frac{m+n+c-x-1}{2} \\
&\geq \frac{m+n-x-1}{2} \ (\because c \geq 0).
\end{align*}
When $H_i$ is an IOC tree for some $1\leq i \leq t$, $G[H_{\geq i}]$ contains at least one odd cycle which is edge-disjoint with $G[H_{> i}]$ by the definition of an IOC tree.
Therefore, $$mc(G) \leq m-x.$$ Then, the approximation ratio is
\begin{align*}
(\text{approximation ratio}) &\geq \frac{m+n-x-1}{2(m-x)} \\
&= \frac{1}{2}+\frac{n-1}{2(m-x)} \\
&\geq \frac{1}{2} + \frac{n-1}{2m}.
\end{align*}
If $H_t$ is a CB graph, the cut size becomes $\frac{m+n-x}{2}$ and the approximation ratio is better than the case when $H_t$ is a tree.
\qed
\end{proof}

\begin{theorem}\label{thm2}
There is an $O(nm)$-time combinatorial $(\frac{1}{2}+\frac{n}{2m})$-approximation algorithm for the Max-Cut problem.
\end{theorem}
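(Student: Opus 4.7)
The plan is to enhance the algorithm of Theorem~\ref{thm1} to close the $\frac{1}{2m}$ gap in the approximation ratio. Inspection of the proof of Theorem~\ref{thm1} shows that when $H_t$ is a CB graph the cut size produced is $\frac{m+n-x+c}{2}$, which, combined with $mc(G)\le m-x$, already yields the target ratio $\frac{1}{2}+\frac{n}{2m}$. A routine calculation generalises this: Theorem~\ref{thm1}'s bound suffices whenever $cm + nx \ge m$, i.e., whenever some CB graph is non-unicyclic ($c\ge 1$) or the $x$ IOC trees already satisfy $nx\ge m$. The only remaining case is therefore $H_t$ a tree with $c=0$ and $nx<m$, in which one must recover essentially one additional cut edge over what the greedy merging of Theorem~\ref{thm1} guarantees.

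In that residual case, the plan is to replace the greedy merging step at $H_t$ by an edge-bipartization subroutine. The remaining freedom, once the 2-colorings of the earlier IOC trees and CB graphs are fixed, lies in choosing the 2-coloring of $V(H_t)$; greedy gives only at least $\lceil |E_{out}|/2\rceil$ crossing edges. Instead, I would encode the optimal joint choice as a minimum edge-bipartization problem on the auxiliary graph $G^\ast$ whose vertex set is $V(H_t)$ and whose edges are those of $H_t$ together with all boundary edges in $E_{out}$ incident to $V(H_t)$, with external endpoints represented by their already-fixed colors. Since $H_t$ is a tree, $G^\ast$ is essentially a tree plus cross-edges, and the cited edge-bipartization procedure~\cite{bipartization} solves the instance in $O(nm)$ time; combined with the $O(m)$ decomposition and greedy phases, the overall running time is $O(nm)$.

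The principal technical obstacle will be to show that this local bipartization really delivers the extra cut edge needed in the bad regime, rather than merely matching the greedy bound. The argument must exploit the fact that the $x$ IOC trees witness $x$ edge-disjoint odd cycles and therefore $mc(G)\le m-x$; combined with the exact bipartization on $G^\ast$, this should yield the improved cut-size bound $\frac{m+n-x}{2}$ (up to integer rounding), giving ratio $\frac{1}{2}+\frac{n}{2(m-x)}\ge \frac{1}{2}+\frac{n}{2m}$. Correctness of the $O(nm)$ bound for the bipartization subroutine on a general $G^\ast$ is the delicate point, since edge bipartization is NP-hard in general; verifying that the near-forest structure of $G^\ast$ puts it in a tractable subclass is what I expect to be the main work.
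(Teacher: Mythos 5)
Your reduction of the problem to the residual case ($H_t$ a tree, $c=0$, $nx<m$) is correct arithmetic, but the core step --- that re-optimizing the 2-coloring of $V(H_t)$ alone, with the colorings of $H_1,\dots,H_{t-1}$ already fixed, ``delivers the extra cut edge'' --- is exactly the part you leave open, and it does not hold in general. If $H_t$ has boundary edges split evenly between the two already-used colors, the \emph{optimal} coloring of $H_t$ still captures only half of them, so local bipartization gains nothing over the greedy step, and the guaranteed cut size remains $\frac{m+n-x-1}{2}$. The missing idea is a \emph{dichotomy on the merged graph} $H'=G[V(H_{t-1})\cup V(H_t)]$: the paper first absorbs trailing components into $H_t$ until an even cycle appears in $H'$, then either (a) computes the \emph{exact} maximum cut of $H'$ and restarts the greedy from it --- the gain then comes from Lemma~\ref{lem2}, where the even cycle forces $mc(H')\geq |V(H')|$, erasing the ``$-1$'' penalty of the tree component --- or (b) certifies $mc(H')<|E(H')|-y$ (resp.\ $-y-1$), which via Lemma~\ref{lem3} lowers the global upper bound to $m-x-1$ and again yields the ratio. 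Your proposal achieves neither branch: it neither guarantees a larger cut nor improves the upper bound on $mc(G)$.

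There is also a running-time problem with the subroutine you invoke. The edge-bipartization algorithm of~\cite{bipartization} is fixed-parameter tractable in the number $k$ of deleted edges (roughly $O(2^k m^2)$); on your auxiliary graph $G^\ast$ the optimum can delete $\Theta(m)$ edges, so that routine does not run in $O(nm)$ time. (The near-forest structure does make the exact optimization tractable --- a tree plus two contracted terminals has bounded treewidth --- but that is not what the cited algorithm provides.) The paper instead builds a bespoke $O(m)$ procedure (Lemma~\ref{lem4}): a dynamic program over the IOC-tree decomposition of a graph with \emph{no even cycles}, combined with guessing which single edge of an odd cycle is left uncut, which is where the $O(nm)$ factor actually arises. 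To repair your proof you would need to (i) replace the local optimization of $H_t$ by an exact max-cut computation on the merged $H'$ together with the fallback bound $mc(H')<|E(H')|-y$, and (ii) supply a genuinely polynomial exact subroutine for that computation.
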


Though the time complexity becomes worse, Theorem~\ref{thm2} improves the approximation ratio of Theorem~\ref{thm1}. The idea is to remove a tree from a tree-bipartite decomposition by merging it with another component to make the generated graph contain an even cycle, which guarantees a decent size of the maximum cut. This idea is also used in~\cite{BZ}, but a different approach is used.

Before showing Theorem~\ref{thm2}, let us show some lemmas. Lemma~\ref{lem2} and Lemma~\ref{lem3} describe the condition to slightly improve the approximation ratio of Theorem~\ref{thm2}. 

\begin{lemma}\label{lem2}
Given a graph $G(V,E)$, let $H_1,\dots,H_t$ be a tree-bipartite decomposition. 
For some $1\leq i \leq t$, let $H' = G[H_{\geq i}]$ where $H'$ contains an even cycle.
Suppose the maximum cut of $H'$ is given.
Then, a $(\frac{1}{2}+\frac{n}{2m})$-approximation algorithm can be constructed in $O(m)$ time.
\end{lemma}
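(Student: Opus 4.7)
The plan is to extend the greedy-merging algorithm from Theorem~\ref{thm1}: initialize the partial cut with the given maximum cut of $H'$, then iterate $j$ from $i-1$ down to $1$, each time choosing between the two orientations of a maximum cut of $H_j$ so that at least half of the $G$-edges between $V(H_j)$ and the already-placed vertices end up cut. The maximum cut of each $H_j$ is computable in linear time (IOC trees are trees and CB graphs are bipartite), and each $G$-edge is inspected a constant number of times during the merges, so the whole procedure runs in $O(m)$ time.

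For the analysis I would let $n_2 = |V(H')|$, $m_2 = |E(H')|$, $n_1 = n - n_2$, and write $x'$ for the number of IOC trees among $H_1, \ldots, H_{i-1}$ and $c' \geq 0$ for the total CB excess $\sum_{j<i,\ H_j\text{ CB}}(|E(H_j)| - |V(H_j)|)$. Then $\sum_{j<i}|E(H_j)| = n_1 - x' + c'$, and since IOC trees are acyclic and CB graphs are bipartite, $\sum_{j<i}mc(H_j)$ equals the same quantity. The standard half-accounting of cross-component edges then gives
\begin{equation*}
|\text{cut}| \;\geq\; mc(H') + \tfrac{1}{2}\bigl(m + n - n_2 + c' - x' - m_2\bigr).
\end{equation*}

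The heart of the argument is the lower bound $mc(H') \geq n_2$. First I would observe that $H'$ is connected: the third condition of a tree-bipartite decomposition, applied for each $j$ with $i \leq j < t$, forces an edge in $H'$ from $H_j$ to some later $H_\ell$, so the $H_j$'s are all linked inside $H'$ and $H'$ itself is connected. Given an even cycle $C$ in $H'$, I would then pick any edge $e \in E(C)$ and extend the path $C - e$ to a spanning tree $T$ of $H'$; then $C$ becomes the fundamental cycle of $e$ with respect to $T$, so $T + e$ is a bipartite subgraph of $H'$ with $n_2$ edges. This yields $mc(H') \geq n_2$, equivalently the bipartization index $\eta_{H'} := m_2 - mc(H')$ satisfies $\eta_{H'} \leq m_2 - n_2$.

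Substituting $mc(H') = m_2 - \eta_{H'}$ into the cut-size inequality and using $c' \geq 0$ together with $\eta_{H'} \leq m_2 - n_2$ simplifies the lower bound to $\frac{m + n - x' - \eta_{H'}}{2}$. For the matching upper bound on $mc(G)$, the $x'$ IOC-induced odd cycles are edge-disjoint and use only edges outside $E(H')$, while any bipartization of $G$ must remove at least $\eta_{H'}$ edges from inside $E(H')$, so $mc(G) \leq m - x' - \eta_{H'}$. Dividing gives an approximation ratio of at least $\frac{1}{2} + \frac{n}{2(m - x' - \eta_{H'})} \geq \frac{1}{2} + \frac{n}{2m}$. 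The main obstacle I expect is the even-cycle step: a generic connected graph only gives $\eta_{H'} \leq m_2 - n_2 + 1$ from the cyclomatic number, and that stray $+1$ is exactly what would spoil the ratio; realizing the even cycle as a fundamental cycle in a carefully chosen spanning tree is what lets me shave off that $1$.
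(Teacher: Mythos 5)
Your proposal is correct and follows essentially the same route as the paper's proof: initialize with the given maximum cut of $H'$, greedily merge $H_{i-1},\dots,H_1$, and combine the lower bound $mc(H') \geq |V(H')|$ (your $\eta_{H'} \leq m_2 - n_2$ is the paper's $m'-l \geq n'$) with the upper bound $mc(G) \leq m - x - l$ from the edge-disjoint odd cycles and the loss inside $H'$. The only difference is that you spell out the spanning-tree/fundamental-cycle justification for $mc(H') \geq |V(H')|$, which the paper states without proof.
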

\begin{proof}
Run the same algorithm as Theorem~\ref{thm1} for $H_1,\dots,H_{i-1}$ where the initial cut is the maximum cut of $H'$.
Let $x$ be the number of IOC trees in $H_1,\dots,H_{i-1}$.
Let $m' = |E(H')|, n' = |V(H')|$ and let $mc(H') = m'-l$ where $l\in \mathbb{N}$.
As $H'$ contains an even cycle, $mc(H') = m'-l \geq n'$. 
Then, the approximation ratio can be computed as follows.
\begin{align*}
(\text{approximation ratio}) &\geq \frac{n-x+m'-l-n'+\frac{1}{2}(m-n+x-m'+n')}{m-x-l} \\
&= \frac{m+n-x+m'-n'-2l}{2(m-x-l)} \\
&\geq \frac{(m+n-x-l)}{2(m-x-l)} \ (\because m' \geq n' + l)\\
&= \frac{1}{2} + \frac{n}{2(m-x-l)} \geq \frac{1}{2}+\frac{n}{2m}
\end{align*}
\qed
\end{proof}

\begin{lemma} \label{lem3}
Given a graph $G(V,E)$, let $H_1,\dots,H_t$ be a tree-bipartite decomposition. 
For some $1\leq i \leq t$, let $y$ be the number of IOC trees in $H_i,\dots,H_t$.
If $mc(G[H_{\geq i}]) < |E(G[H_{\geq i}])|-y$, a $(\frac{1}{2}+\frac{n}{2m})$-approximation algorithm can be constructed in $O(m)$ time.
\end{lemma}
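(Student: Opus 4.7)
The plan is to run the algorithm from Theorem~\ref{thm1} verbatim on the given tree-bipartite decomposition $H_1,\dots,H_t$, and then re-examine its approximation ratio using the strengthened hypothesis. The cut-production side of Theorem~\ref{thm1}'s analysis carries over unchanged: letting $x$ denote the total number of IOC trees in the decomposition and $c\geq 0$ the total CB surplus, the produced cut has size at least $\frac{m+n+c-x-1}{2}\geq \frac{m+n-x-1}{2}$ (one more when $H_t$ is a CB graph, which only helps), and the procedure clearly runs in $O(m)$. So the remaining task is to argue that the hypothesis tightens the bound $mc(G)\leq m-x$ from Theorem~\ref{thm1} to $mc(G)\leq m-x-1$.

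To do this I split the IOC trees as $x = x_{<i}+y$, where $x_{<i}$ counts those in $H_1,\dots,H_{i-1}$. For each $j<i$ with $H_j$ an IOC tree, the IOC definition supplies an odd cycle whose edges lie in $E(G[H_{\geq j}])\setminus E(G[H_{>j}])$. For distinct indices these sets are pairwise disjoint, and for $j<i$ each of them is moreover disjoint from $E(G[H_{\geq i}])$ (since $G[H_{>j}]\supseteq G[H_{\geq i}]$). Hence any cut of $G$ misses at least $x_{<i}$ edges among those lying outside $G[H_{\geq i}]$, and combining this with the hypothesis $mc(G[H_{\geq i}])\leq |E(G[H_{\geq i}])|-y-1$ yields
\[
mc(G) \;\leq\; \bigl(|E(G[H_{\geq i}])|-y-1\bigr) + \bigl(m-|E(G[H_{\geq i}])|-x_{<i}\bigr) \;=\; m-x-1.
\]
Plugging this into the ratio then produces $\frac{(m+n-x-1)/2}{m-x-1}=\frac{1}{2}+\frac{n}{2(m-x-1)}\geq\frac{1}{2}+\frac{n}{2m}$, as required.

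The step I expect to require the most care is the edge-disjointness bookkeeping: confirming that the $x_{<i}$ witnessing odd cycles and the edge set $E(G[H_{\geq i}])$ really do lie in pairwise disjoint subsets of $E(G)$, so that the two upper bounds add without double-counting. The corner cases — $H_t$ being a CB graph, or $i=t$ making $x_{<i}$ or $y$ vacuous — only improve the cut-size lower bound or render part of the cycle count empty, so they require no new ideas.
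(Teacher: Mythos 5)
Your proof is correct and takes essentially the same route as the paper: run the Theorem~\ref{thm1} algorithm unchanged, tighten the upper bound to $mc(G)\leq m-x-1$, and recompute the ratio. The paper simply asserts this bound ``from the assumption,'' whereas you spell out the edge-disjointness bookkeeping (splitting $x=x_{<i}+y$ and checking the witnessing odd cycles avoid $E(G[H_{\geq i}])$) that justifies it; the remainder is identical.
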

\begin{proof}
Run the same algorithm as Theorem~\ref{thm1}. Let $x$ be the number of IOC trees in the tree-bipartite decomposition. From the assumption, $mc(G) \leq m-x-1$. Then,
\begin{align*}
(\text{approximation ratio}) &\geq \frac{m+n-x-1}{2(m-x-1)} \\
&= \frac{1}{2}+\frac{n}{2(m-x-1)} \\
&\geq \frac{1}{2} + \frac{n}{2m}.
\end{align*}
\qed
\end{proof}

Lemma~\ref{lem4} utilizes the property that the structure of a graph without even cycles is similar to a tree, and a technique like dynamic programming is applicable.

\begin{lemma} \label{lem4}
Let $G(V,E)$ be a connected graph without even cycles and $A,B \subset V$ be disjoint sets. Let $y$ be the number of odd cycles in $G$. Then, the existence of the cut $C(A', B')$ with the size $m-y$ where $A\subset A'$ and $B\subset B'$ can be decided in $O(m)$ time. If such a cut exists, the cut can be constructed in $O(m)$ time.
\end{lemma}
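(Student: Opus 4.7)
The plan is to exploit the fact that a connected graph with no even cycle is an \emph{odd cactus}, and to run a bottom-up dynamic program on its block-cut tree.

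First I would verify the structural claim: if two distinct cycles shared a path the graph would contain an even cycle, hence every $2$-connected block of $G$ is either a bridge or an odd cycle. Therefore $y$ equals the number of odd-cycle blocks, $m = n-1+y$, and a cut of size $m-y$ exists iff there is a $2$-colouring $\phi:V\to\{0,1\}$ extending the prescribed values ($0$ on $A$, $1$ on $B$) that cuts every bridge and leaves exactly one monochromatic edge in each odd-cycle block; the parity of an odd cycle makes at least one monochromatic edge forced. After building the block-cut tree in $O(m)$ time, promoting every vertex of $A\cup B$ to an artificial cut vertex with a hard-wired colour, and rooting the tree arbitrarily, the DP stores for each cut vertex $v$ two bits $F(v,0),F(v,1)\in\{\mathrm{True},\mathrm{False}\}$: whether the subtree under $v$ admits a feasible extension with $\phi(v)=c$.

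Bridges propagate trivially: $f_B(t,c)=F(b,1-c)$. For an odd-cycle block $B$ with top cut vertex $t=u_0$ and bottom cut vertices $b_1,\dots,b_p$ appearing cyclically as $u_0,u_1,\dots,u_p$, with arc lengths $\ell_0,\dots,\ell_p$ and prefix parities $s(i)=(\ell_0+\cdots+\ell_{i-1})\bmod 2$, the key observation is that once $\phi(t)=c$ is fixed and the lone monochromatic edge is declared to lie in arc $j$, the cut-vertex colours are uniquely forced to $\phi(u_i)=c\oplus s(i)$ for $i\le j$ and $\phi(u_i)=c\oplus s(i)\oplus 1$ for $i>j$; the cycle closes automatically because $\sum_i\ell_i$ is odd.

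The main obstacle is executing each block in linear time, since the naive double loop over $j$ and $i$ costs $\Theta(p^2)$ per block and could blow up. I would sidestep this by forming $P_i:=F(b_i,c\oplus s(i))$ and $Q_i:=F(b_i,c\oplus s(i)\oplus 1)$, precomputing prefix ANDs of $(P_i)$ and suffix ANDs of $(Q_i)$ in $O(p)$, and then checking every candidate $j$ in $O(1)$; the per-block cost becomes $O(p+1)$ plus the block's arc lengths, which sums to $O(m)$. Feasibility of the whole instance is $F(\mathrm{root},c)=\mathrm{True}$ for some colour $c$ permitted at the root. To construct a witnessing cut I would record the winning $j$ per block during the DP and do one top-down pass: at each block, with the top vertex's colour already committed, I recover the bottom cut-vertex colours from the closed form and paint each arc alternately, inserting a single monochromatic edge inside arc $j$; bridges simply copy with a flip. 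Every vertex and edge is touched $O(1)$ times, so the construction is also $O(m)$, completing the proof.
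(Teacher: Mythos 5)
Your proof is correct, and it reaches the result by a genuinely different route from the paper's. The paper reuses its tree-bipartite decomposition: since $G$ has no even cycles, the decomposition consists of IOC trees $H_1,\dots,H_{t-1}$ and a final tree $H_t$, where each IOC tree has a unique root and exactly two edges to the remainder; the algorithm sweeps the components in order and, for each root, records which of its two possible sides admit a cut of size $|E(H_i)|-1$ consistent with the prescribed vertices (both, exactly one, or neither), finishing with the tree $H_t$. You instead identify $G$ explicitly as an odd cactus --- every block a bridge or an odd cycle --- and run a two-state Boolean DP on the rooted block-cut tree, using the prefix/suffix-AND trick so that each odd-cycle block is processed in time linear in the number of cut vertices on it. The two arguments are morally the same bottom-up propagation of ``which colours of this articulation point are feasible,'' and both give $O(m)$, but yours is self-contained, makes the structural source of tractability (the cactus property, $m=n-1+y$, and the ``cut every bridge, leave exactly one monochromatic edge per odd cycle'' characterization) explicit, and handles the forced-vertex constraints uniformly by promoting members of $A\cup B$ to hard-wired cut vertices; the paper's version leans on the machinery of Lemma~\ref{lem1} and on the observation that each IOC tree carries exactly one odd cycle. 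One small point worth writing out in your version: the claim that a block containing two distinct cycles forces an even cycle needs the theta-subgraph parity argument (three internally disjoint paths between two vertices yield three cycles whose lengths sum to an even number, so not all three can be odd); it is standard, but it is the load-bearing structural step.
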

\begin{proof}
Let $H_1,\dots,H_t$ be a tree-bipartite decomposition of $G$. As $G$ does not contain even cycles, $H_1,\dots,H_{t-1}$ are IOC trees and $H_t$ is a tree. Moreover, for each $H_i$, there exists only one root of $H_i$ and $|E_G(H_i,G[H_{>i}])| = 2$ because there must be an even cycle in $G$ if there exist more than two roots or $|E_G(H_i,G[H_{>i}])|>2$.

Iterate $H$ from $H_1$ to $H_{t-1}$. For $1\leq i \leq t-1$, let $r_i$ be the root of $H_i$. Since $H_i$ is a (IOC) tree, we can judge whether there exists a cut $C_i(A_i,B_i)$ of $H_i$ with the size $|E(H_i)|-1$ or not where $A_t\cap B = \emptyset$ and $B_t\cap A = \emptyset$. Let us consider both cases $r_i \in A_i$ and $r_i \in B_i$. If $C_i$ does not exist in both cases, it is shown that there does not exist a cut of $G$ with the size $m-y$. If $C_i$ exists in both cases, leave $r_i$ unfixed. If $C_i$ exists in only one case, append $r_i$ to $A$ if $r_i \in A_i$ and append $r_i$ to $B$ otherwise. If the iteration ends, judge whether there exists a cut $C_t(A_t,B_t)$ of $H_t$ with the size $|E(H_t)|-1$ or not where $A_t\cap B = \emptyset$ and $B_t\cap A = \emptyset$. If such a cut exists, the cut of $G$ with the size $m-y$ can be constructed from $H_t$ to $H_1$. If such a cut does not exist, it is shown that there does not exist a cut of $G$ with cut size $m-y$.

Even if the assignment of some vertices are fixed, the maximum cut of an IOC tree can be computed in linear time by greedily constructing a cut from leaves. Therefore, the time complexity is $O(m)$ in total.
\qed
\end{proof}

\begin{figure}
\includegraphics[width=\textwidth]{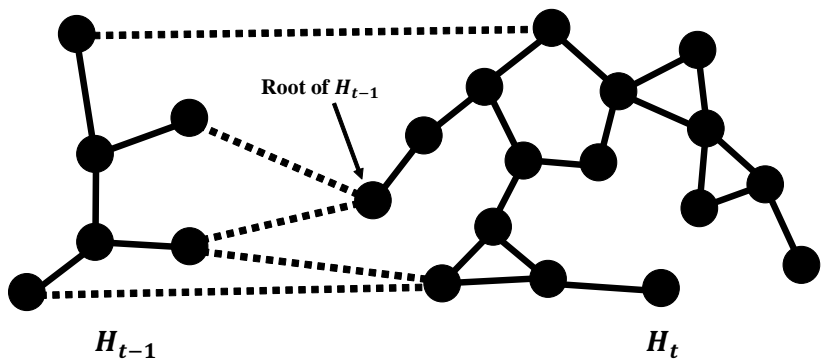}
\caption{Merging an IOC tree and a graph without even cycles} \label{fig2}
\end{figure}

Then, let us prove Theorem~\ref{thm2} with the previous lemmas.

\begin{proof}
Let $H_1,\dots,H_t$ be a tree-bipartite decomposition of $G$.
As long as $G[V(H_t)\cup V(H_{t-1})]$ does not contain an even cycle, update $H_{t-1} \leftarrow G[V(H_t)\cup V(H_{t-1})]$ and $t \leftarrow t-1$. Then, $H_t$ is a tree or a graph with edge-disjoint odd cycles. Let $y$ be the number of odd cycles in $H_t$.

Suppose $t=1$. Since $G=H_t$ does not contain even cycles, $mc(G) = |E|-1$. Therefore, the maximum cut can be constructed in linear time by constructing a spanning tree of $G$.

Then, let us consider the case $t>1$. Let $H' = G[V(H_{t-1})\cup V(H_t)]$.

Suppose $H_{t-1}$ is a CB graph. If the maximum cut of $H'$ is constructed or $mc(H') < |E(H')|-y$ is shown, by Lemma~\ref{lem2} and Lemma~\ref{lem3}, the approximation ratio $\frac{1}{2}+\frac{n}{2m}$ is achieved. Consider the graph $G'$ that is $H_{t-1}$ with $E_G(H_{t-1},H_t)$ and its endpoints. As $G'$ is edge-disjoint with $H_t$, $G'$ should be bipartite to satisfy $mc(H') = |E(H')|-y$. If $G'$ is bipartite, by fixing the assignment of $V(G')$ with the maximum cut of $G'$, Lemma~\ref{lem4} can be applied to $H_t$ to construct the maximum cut of $H'$ or to show $mc(H') < |E(H')|-y$.

Suppose $H_{t-1}$ is an IOC tree (Fig.~\ref{fig2}). If the maximum cut of $H'$ is constructed or $mc(H') < |E(H')|-y-1$ is shown, by Lemma~\ref{lem2} and Lemma~\ref{lem3}, the approximation ratio $\frac{1}{2}+\frac{n}{2m}$ is achieved. Consider the graph $G'$ that is $H_{t-1}$ with $E_G(H_{t-1},H_t)$ and its endpoints. As $G'$ is edge-disjoint with $H_t$, $mc(G')$ should be $|E(G')|-1$ to satisfy $mc(H') = |E(H')|-y-1$. Let $U$ be an odd cycle in $G'$. If there exists a cut of $G'$ with the size $|E(G')|-1$, the only edge which is not in the cut should be in $U$. Therefore, for each edge  $e$ in $U$, check the graph $(V(G'),E(G')\setminus \{e\})$ is bipartite or not, and if it is bipartite, fix the assignment of $V(G')$ with the maximum cut and apply Lemma~\ref{lem4} to $H_t$. As the number of edges in $U$ is $O(n)$, the time complexity is $O(nm)$.
\qed
\end{proof}

For graphs with the maximum degree $\Delta$, Corollary~\ref{col1} immediately follows from Theorem~\ref{thm2} since $\frac{m}{2n} \leq \Delta$.
\begin{corollary}\label{col1}
There is an $O(nm)$-time combinatorial $(\frac{1}{2}+\frac{1}{\Delta})$-approximation algorithm for the Max-Cut problem.
\end{corollary}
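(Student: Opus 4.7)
The plan is to invoke Theorem~\ref{thm2} directly and then substitute a bound on $\frac{n}{2m}$ coming from the degree constraint. Concretely, I would run the $O(nm)$-time algorithm of Theorem~\ref{thm2} on the input graph $G$ with maximum degree $\Delta$; by that theorem the output cut has size at least $\bigl(\frac{1}{2}+\frac{n}{2m}\bigr)\,mc(G)$.

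The key arithmetic step is to relate $\frac{n}{2m}$ to $\frac{1}{\Delta}$. Since every vertex has degree at most $\Delta$, the handshake identity gives
\begin{equation*}
2m \;=\; \sum_{v\in V} \deg(v) \;\leq\; n\Delta,
\end{equation*}
so $\frac{n}{2m} \geq \frac{1}{\Delta}$. (I would note that the introductory sentence of the corollary contains a small typo: the inequality should read $\frac{m}{n} \leq \frac{\Delta}{2}$, equivalently $2m \leq n\Delta$.) Substituting this into the bound of Theorem~\ref{thm2} yields approximation ratio at least $\frac{1}{2}+\frac{1}{\Delta}$, with unchanged time complexity $O(nm)$.

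There is essentially no obstacle here; the only thing to double-check is that the algorithm from Theorem~\ref{thm2} is stated for general graphs, so it applies verbatim to bounded-degree graphs without any modification. Hence the corollary follows immediately and the proof is a one-line substitution.
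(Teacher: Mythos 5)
Your proposal is correct and matches the paper's own (one-line) justification: Corollary~\ref{col1} is derived by applying Theorem~\ref{thm2} and bounding $\frac{n}{2m} \geq \frac{1}{\Delta}$ via the handshake identity $2m \leq n\Delta$. You are also right that the paper's stated inequality $\frac{m}{2n} \leq \Delta$ is a typo for $\frac{2m}{n} \leq \Delta$.
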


In Theorem~\ref{thm2}, the bottleneck is merging an IOC tree. However, if $\frac{m}{n}$ is bounded, we can bound the number of odd cycles in the graph, which results in the speedup of the algorithm. 

\begin{theorem}\label{thm3}
There is an $O(n)$-time combinatorial $(\frac{1}{2}+\frac{n}{2m})$-approximation algorithm for the Max-Cut problem in graphs where $m \leq 2n$.
\end{theorem}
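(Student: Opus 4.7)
The plan is to follow the algorithm and analysis of Theorem~\ref{thm2} almost verbatim, accelerating only the IOC-tree merging subcase. Because $m \le 2n$, every other step of Theorem~\ref{thm2} already runs in $O(m) = O(n)$ time; the sole bottleneck is the loop over the $|U| = O(n)$ edges of the odd cycle $U \subseteq G'$, each iteration of which costs $O(m)$, yielding $O(n^2)$ under $m \le 2n$.

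First, I would build the tree-bipartite decomposition in $O(n)$ time by Lemma~\ref{lem1} and perform the tail-merging from the proof of Theorem~\ref{thm2}, disposing of $t=1$ and the case where $H_{t-1}$ is a CB graph in $O(n)$ time exactly as in Theorem~\ref{thm2}. For the IOC-tree case, rather than testing each $e \in U$ individually, I would compute in a single $O(|E(G')|) = O(n)$-time pass the entire set of edges $e$ with $G' \setminus \{e\}$ bipartite---that is, edges lying on every odd cycle of $G'$. Concretely, take a DFS spanning tree of $G'$, classify each non-tree edge as odd or even by the parity of the tree distance between its endpoints, and then use a standard subtree-sum trick to count, for every tree edge $e$, the numbers of odd and of even non-tree edges whose fundamental cycle passes through $e$. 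A tree edge $e$ qualifies iff every odd non-tree edge spans $e$ and no even non-tree edge does; a non-tree edge qualifies iff it is the unique odd non-tree edge. If some qualifying $e$ lies in $U$, fix the 2-coloring of $V(G')$ induced by $G' \setminus \{e\}$ and invoke Lemma~\ref{lem4} on $H_t$ once at cost $O(n)$; otherwise Lemma~\ref{lem3} immediately yields the ratio $\tfrac{1}{2} + \tfrac{n}{2m}$.

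The sparsity bound $m \le 2n$ enters by capping the cyclomatic number of $G'$ at $n+1$, so the number of non-tree edges in the DFS spanning tree is $O(n)$, keeping the subtree-sum pass linear. The main obstacle is justifying correctness of the single-pass candidate identification---i.e., arguing that every simple odd cycle of $G'$ decomposes as a symmetric difference of an odd number of odd fundamental cycles, so that the per-fundamental-cycle span conditions precisely characterize edges lying on every odd cycle. Once that is verified, the full algorithm runs in $O(n)$ time and the approximation-ratio analysis carries over unchanged from Theorem~\ref{thm2} via Lemmas~\ref{lem2} and~\ref{lem3}.
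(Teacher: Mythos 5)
There is a genuine gap, and it is not the one you flag at the end of your proposal. Your single-pass identification of all edges $e$ with $G'\setminus\{e\}$ bipartite is fine (that characterization via fundamental cycles is standard and runs in linear time), but it does not rescue the argument. The reason Theorem~\ref{thm2} iterates over \emph{every} edge $e$ of $U$ is that each qualifying $e$ induces a \emph{different} two-coloring of $V(G')$, hence a different set of boundary constraints $A,B$ handed to Lemma~\ref{lem4} on $H_t$; a cut of $H'$ of size $|E(H')|-y-1$ exists iff \emph{at least one} of these colorings extends, so failure for one choice of $e$ tells you nothing about the others. Your plan to ``invoke Lemma~\ref{lem4} on $H_t$ once'' for ``some qualifying $e$'' is therefore incorrect: if that one invocation fails you cannot conclude $mc(H')<|E(H')|-y-1$ and fall back on Lemma~\ref{lem3}. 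If you repair this by trying all qualifying edges (there can be $\Theta(n)$ of them, e.g.\ when $G'$ is a single odd cycle), you are back to $\Theta(n)$ invocations of Lemma~\ref{lem4} at $O(n)$ each, i.e.\ $O(n^2)$, not $O(n)$.

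The paper's proof avoids this by exploiting sparsity \emph{before} entering the merging step. Running the Theorem~\ref{thm1} analysis with $m\le 2n$ shows the target ratio is already met whenever the number $x$ of IOC trees is at least $2$, so only $x\le 1$ remains; in that case, when $H_{t-1}$ is an IOC tree, the merged $H_t$ must be a tree. This permits reversing the roles of the two sides: keep the unique odd cycle with $H_{t-1}$ by selecting two edges $e_1,e_2$ to its root $r$, define $G'$ as $H_t$ plus the \emph{remaining} connecting edges, and observe that $G'$ must be bipartite for the desired cut to exist. That yields a single bipartiteness check and a single application of Lemma~\ref{lem4} to the unicyclic graph $(V(H_{t-1})\cup\{r\},E(H_{t-1})\cup\{e_1,e_2\})$ --- no enumeration over cycle edges at all. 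Your proposal is missing both the $x\ge 2$ reduction and this role reversal, and without them the IOC-tree subcase does not run in linear time.
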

\begin{proof}
Suppose the same algorithm is run as Theorem~\ref{thm1}. Let $x$ be the number of IOC trees in the tree-bipartite decomposition.
\begin{align*}
(\text{approximation ratio}) &\geq  \frac{1}{2}+\frac{n-1}{2(m-x)} \\
&= \frac{1}{2} + \frac{n}{2m} + \frac{-1+\frac{n}{m}x}{2(m-x)} \\
&\geq \frac{1}{2} + \frac{n}{2m} + \frac{-1+\frac{1}{2}x}{2(m-x)}.
\end{align*}
Therefore, if $x \geq 2$, the approximation ratio $\frac{1}{2}+\frac{n}{2m}$ is already achieved.

Let us consider the case $x \leq 1$.
The bottleneck of the algorithm in Theorem~\ref{thm2} is when $H_{t-1}$ is an IOC tree. As $x \leq 1$, $H_t$ is a tree when $H_{t-1}$ is an IOC tree.
Let $H'=G[V(H_{t-1})\cup V(H_t)]$.
If the maximum cut of $H'$ is constructed or $mc(H') < |E(H')|-1$ is shown, by Lemma~\ref{lem2} and Lemma~\ref{lem3}, the approximation ratio $\frac{1}{2}+\frac{n}{2m}$ is achieved. Let $r\in H_t$ be a root of $H_{t-1}$, and let $e_1,e_2\in E_G(V(H_{t-1}),\{r\})$ be edges such that $e_1 \neq e_2$ and the graph $(V(H_{t-1})\cup \{r\}, E(H_{t-1})\cup \{e_1,e_2\})$ contains an odd cycle (the existence of $r$ and $e_1,e_2$ is guaranteed by the definition of an IOC tree). Consider the graph $G'$ that is $H_{t}$ with $E_G(H_{t-1},H_t)\setminus \{e_1,e_2\}$ and its endpoints. As $G'$ is edge-disjoint with $H_{t-1}$, $G'$ should be bipartite to be $mc(H') = |E(H')|-1$. If $G'$ is bipartite, by fixing the assignment of $V(G')$ with the maximum cut of $G'$, Lemma~\ref{lem4} can be applied to the graph $(V(H_{t-1})\cup \{r\}, E(H_{t-1})\cup \{e_1,e_2\})$.
\qed
\end{proof}

From Theorem~\ref{thm3}, Corollary~\ref{col2} and Corollary~\ref{col3} follow immediately.
Especially, Corollary~\ref{col3} solves the open problem in \cite{BZ}, which asks the existence of the linear-time combinatorial $\frac{5}{6}$-approximation algorithm for the Max-Cut problem in subcubic graphs.

\begin{corollary}\label{col2}
There is an $O(n)$-time combinatorial $\frac{3}{4}$-approximation algorithm for the Max-Cut problem in graphs with the maximum degree four.
\end{corollary}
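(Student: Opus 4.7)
The plan is to observe that Corollary~\ref{col2} is an immediate specialization of Theorem~\ref{thm3}, so the proof reduces to verifying that the hypothesis $m \leq 2n$ of Theorem~\ref{thm3} holds whenever the maximum degree is four, and then substituting into the approximation ratio.

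First I would invoke the handshake lemma: if every vertex has degree at most $4$, then $2m = \sum_{v \in V} \deg(v) \leq 4n$, which yields $m \leq 2n$ directly. Hence the graph satisfies the precondition of Theorem~\ref{thm3} and we may run the $O(n)$-time algorithm from that theorem without modification. Since the algorithm's running time in Theorem~\ref{thm3} is already $O(n)$, the time complexity carries over unchanged.

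Next I would substitute the degree bound into the approximation ratio guarantee. Theorem~\ref{thm3} gives a cut of size at least $(\tfrac{1}{2} + \tfrac{n}{2m}) \cdot mc(G)$. Using $m \leq 2n$, we have $\tfrac{n}{2m} \geq \tfrac{n}{4n} = \tfrac{1}{4}$, so the approximation ratio is at least $\tfrac{1}{2} + \tfrac{1}{4} = \tfrac{3}{4}$, as required.

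There is essentially no obstacle here, since Theorem~\ref{thm3} is doing all the work; the only things to check are the elementary degree-sum bound and the monotonicity of $\frac{n}{2m}$ in $m$. The proof thus amounts to these two short observations followed by a reference to Theorem~\ref{thm3}.
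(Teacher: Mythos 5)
Your proposal is correct and matches the paper's intent exactly: the paper states that Corollary~\ref{col2} follows immediately from Theorem~\ref{thm3}, and your two observations --- the degree-sum bound $2m = \sum_v \deg(v) \leq 4n$ giving $m \leq 2n$, and the substitution $\frac{n}{2m} \geq \frac{1}{4}$ --- are precisely the verification that justifies this. Nothing is missing.
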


\begin{corollary}\label{col3}
There is an $O(n)$-time combinatorial $\frac{5}{6}$-approximation algorithm for the Max-Cut problem in subcubic graphs.
\end{corollary}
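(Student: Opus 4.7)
The plan is to derive Corollary~\ref{col3} as an immediate specialization of Theorem~\ref{thm3}. The only thing that needs to be checked is that the hypothesis $m \le 2n$ of Theorem~\ref{thm3} holds for subcubic graphs, and that the bound $\frac{1}{2}+\frac{n}{2m}$ specializes to $\frac{5}{6}$ in this regime.

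First, I would apply the handshake lemma: since every vertex of a subcubic graph $G(V,E)$ has degree at most $3$,
\begin{equation*}
2m \;=\; \sum_{v \in V} \deg(v) \;\le\; 3n,
\end{equation*}
so $m \le \tfrac{3n}{2} \le 2n$. Hence Theorem~\ref{thm3} is applicable, producing in $O(n)$ time a cut of $G$ whose size is at least $\bigl(\tfrac{1}{2}+\tfrac{n}{2m}\bigr)\, mc(G)$.

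Next, I would use the inequality $m \le \tfrac{3n}{2}$ once more to estimate $\tfrac{n}{2m} \ge \tfrac{n}{3n} = \tfrac{1}{3}$, so that the approximation ratio is at least $\tfrac{1}{2}+\tfrac{1}{3} = \tfrac{5}{6}$. This gives the stated $\tfrac{5}{6}$-approximation.

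There is essentially no obstacle: the entire argument is a two-line substitution into Theorem~\ref{thm3}, and the algorithmic work (tree-bipartite decomposition, the merging step for the last IOC tree, and the applications of Lemma~\ref{lem2}--Lemma~\ref{lem4}) has already been handled inside that theorem. The only point worth emphasizing is that the $O(n)$ running time is preserved because subcubic graphs satisfy $m = O(n)$, so the $O(m)$ subroutines used inside Theorem~\ref{thm3} all run in $O(n)$ time on this input class.
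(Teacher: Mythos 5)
Your proposal is correct and matches the paper's intended derivation exactly: the paper states that Corollary~\ref{col3} follows immediately from Theorem~\ref{thm3} because subcubic graphs satisfy $\frac{m}{n}\leq\frac{3}{2}$, which is precisely your handshake-lemma substitution yielding $\frac{1}{2}+\frac{n}{2m}\geq\frac{1}{2}+\frac{1}{3}=\frac{5}{6}$.
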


\section{Open Problems}
We present an $O(nm)$-time $(\frac{1}{2}+\frac{n}{2m})$-approximation algorithm.
If there exists a linear-time $(\frac{1}{2}+\frac{n}{2m})$-approximation algorithm, it is theoretically interesting. As regular graphs with high girths are well studied in the field of quantum computation~\cite{Quantum&Classical}, optimizing our algorithm for such graphs is also one of the interesting future directions.

There is still a large gap between the approximation ratio of combinatorial algorithms and that of algorithms based on semidefinite programming. Showing some inapproximability of combinatorial algorithms or filling the gap by developing a better combinatorial approximation algorithm may help the interpretation of the gap.

%
%
%
%

\end{document}